\author{Sushrut Karmalkar\\\texttt{sushrutk@cs.utexas.edu}\\UT Austin \and Eric Price\\\texttt{ecprice@cs.utexas.edu}\\UT Austin}
\title{Compressed Sensing with Adversarial Sparse Noise via L1 Regression}
\newtheorem{theorem}{Theorem}[section]
\newtheorem{lemma}[theorem]{Lemma}
\newtheorem{definition}[theorem]{Definition}
\newtheorem{fact}[theorem]{Fact}
\newtheorem{claim}[theorem]{Claim}
\newcommand{\norm}[1]{\|#1\|}
\newcommand{\wh}{\widehat}
\newcommand{\eps}{\epsilon}
\newcommand{\R}{\mathbb{R}}
\newcommand{\RN}[1]{%
  \textup{\uppercase\expandafter{\romannumeral#1}}%
}
\newcommand{\vertiii}[1]{{\left\vert\kern-0.25ex\left\vert\kern-0.25ex\left\vert #1 
		\right\vert\kern-0.25ex\right\vert\kern-0.25ex\right\vert}}
\DeclareMathOperator*{\argmin}{arg\,min}
\DeclareMathOperator\erf{erf}
\newcommand{\define}[4][ignore]{%
  \ifstrequal{#1}{ignore}{}{
  \@namedef{thmtitle@#2}{#1}}%
  \@namedef{thm@#2}{#4}%
  \@namedef{thmtypen@#2}{lemma}%
  \newtheorem{thmtype@#2}[theorem]{#3}%
  \newtheorem*{thmtypealt@#2}{#3~\ref{#2}}%
}
\newcommand{\state}[1]{%
  \@namedef{curthm}{#1}
  \@ifundefined{thmtitle@#1}{
  \begin{thmtype@#1}
    }{
  \begin{thmtype@#1}[\@nameuse{thmtitle@#1}]
  }
    \label{#1}
    \@nameuse{thm@#1}
  \end{thmtype@#1}
  \@ifundefined{thmdone@#1}{
  \@namedef{thmdone@#1}{stated}%
  }{}
}
\newcommand{\restate}[1]{%
  \@namedef{curthm}{#1}
  \@ifundefined{thmtitle@#1}{
    \begin{thmtypealt@#1}
    }{
  \begin{thmtypealt@#1}[\@nameuse{thmtitle@#1}]
  }
    \@nameuse{thm@#1}
  \end{thmtypealt@#1}
  \@ifundefined{thmdone@#1}{
  \@namedef{thmdone@#1}{stated}%
  }{}
}
\newcommand{\thmlabel}[1]{
  \@ifundefined{thmdone@\@nameuse{curthm}}{\label{#1}
    }{\tag*{\eqref{#1}}}
}
\begin{document}
	
	\maketitle

\begin{abstract}
	We present a simple and effective algorithm for the problem of
	\emph{sparse robust linear regression}.  In this problem, one would like to estimate a sparse vector $w^* \in \R^n$ from linear
	measurements corrupted by sparse noise that can arbitrarily change an adversarially chosen $\eta$ fraction of measured responses $y$, as well as introduce bounded norm noise to the responses.
	
	For Gaussian measurements, we show that a simple algorithm based on
	L1 regression can successfully estimate $w^*$ for any
	$\eta < \eta_0 \approx 0.239$, and that this threshold is tight for
	the algorithm.  The number of measurements required by the algorithm
	is $O(k \log \frac{n}{k})$ for $k$-sparse estimation, which is
	within constant factors of the number needed without any sparse
	noise.

	Of the three properties we show---the ability to estimate sparse, as
	well as dense, $w^*$; the tolerance of a large constant fraction of
	outliers; and tolerance of adversarial rather than distributional
	(e.g., Gaussian) dense noise---to the best of our knowledge, no
	previous result achieved more than two.
\end{abstract}

\section{Introduction}


Linear regression is the problem of estimating a signal vector from
noisy linear measurements.  It is a classic problem with applications
in almost every field of science.  In recent decades, it has also
become popular to impose a sparsity constraint on the signal vector.
This is known as ``sparse recovery'' or ``compressed sensing'', and
(when the assumption holds) can lead to significant savings in the
number of measurements required for accurate estimation.

A well-known problem with the most standard approaches to linear
regression and compressed sensing is that they are not robust to
outliers in the data.  If even a single data point $(x_i, y_i)$ is
perturbed arbitrarily, the estimates given by the algorithms can also
be perturbed arbitrarily far.  Addressing this for linear regression
is one of the primary focuses of the field of robust
statistics~\cite{huber}.  Unfortunately, while the problem is clear,
the solution is not---no fully satisfactory robust algorithms exist,
particularly for high-dimensional data.

In this paper, we consider the model of robustness in which only the
responses $y_i$, not the features $x_i$, are corrupted by outliers.
In this model, if the features $x_i$ are i.i.d.\ normal, we show that
the classic algorithm of L1 minimization performs well and has fairly
high robustness, for both dense and sparse linear regression.  In
particular, we consider the observation model
\begin{align}\label{eq:signalmodel}
y = Xw^* + \zeta + d
\end{align}
where $X \in \R^{m \times n}$ is the observation matrix,
$w^* \in \R^n$ is the $k$-sparse signal, $\zeta \in \R^m$ is an
$\eta m$-sparse noise vector, and $d \in \R^m$ is a (possibly dense)
noise vector.  We will focus on the case of $X$ having i.i.d.\
$N(0, 1)$ entries, but the core lemmas and techniques can apply
somewhat more generally.

Without adversarial corruptions---i.e.\ if $\eta = 0$ so
$\zeta = 0$---this would be the compressed sensing problem. The most
standard solution for compressed sensing~\cite{CRT06} is L1
minimization: if $m > \Theta(k \log\frac{n}{k})$ then with high probability
\[
\wh{w} := \argmin_{\norm{y - Xw}_2 \leq \sigma} \norm{w}_1
\]
for any $\sigma > \norm{d}_2$ will satisfy
$\norm{\wh{w} - w^*}_2 \leq O(\sigma/\sqrt{m})$.  Unfortunately, this algorithm
is not robust to sparse noise of large magnitude: a single faulty
measurement $y_i$ can make the $(\norm{y - Xw}_2 \leq \sigma)$
ball infeasible.

To make the algorithm robust to sparse measurement noise, a natural approach is to replace the (non-robust) $\ell_2$ norm with the (robust) $\ell_1$ norm, as well as to swap the objective and the constraint. This ensures that the constrained parameter does not involve outliers. In this paper we show that this approach works, i.e., we show that
\begin{align}\label{eq:l1min}
\wh{w} := \argmin_{\norm{w}_1 \leq \lambda} \norm{y - Xw}_1
\end{align}
is a robust estimator for $w^*$.  In the following theorem, we show that~\eqref{eq:l1min} is robust to any fraction of corruptions $\eta$ less than
$\eta_0 := 2\left( 1 - \Phi(\sqrt{2 \log 2})\right) \approx 0.239$,
where $\Phi: \R \to [0,1]$ is the standard normal CDF.  If
$\lambda = \norm{w^*}_1$, the reconstruction error is
$O(\norm{d}_1/m)$; for larger $\lambda$, it additionally grows with
$\lambda - \norm{w^*}_1$:

\define[Sparse Case]{thm:main}{Theorem}{Let $\eta < \eta_0 - \eps$ where $\eps > 0$, and let
	$X \in \R^{m \times n}$ have i.i.d.\ $N(0, 1)$ entries with
	$m > C \frac{\alpha^2}{\epsilon^2} k \log(\frac{en}{\alpha^2 \eps k})$ for some large enough constant $C$ and
	parameter $\alpha \geq \frac{2}{\epsilon}$.  Then with probability $1-e^{-\Omega(\epsilon^2m)}$
	the matrix $X$ will have the following property: for any
	$y = Xw^* + d + \zeta$ with $\norm{w^*}_0 \leq k$ and
	$\norm{\zeta}_0 \leq \eta m$,
	\[  \wh{w} := \argmin_{\norm{w}_1 \leq \lambda} \norm{y - Xw}_1 \]
	for $\lambda \geq \norm{w^*}_1$ satisfies
	\[
	\norm{w^* - \wh{w}}_2 \leq O\left( \frac{1}{\epsilon - \frac{1}{\alpha}} \left(\frac{1}{m}\norm{d}_1 \right)  +  \frac{\lambda - \norm{w^*}_1}{\alpha \sqrt{k}} \right).
	\]}

\define[Dense Case]{thm:main2}{Theorem}{	 Let $\eta < \eta_0 - \eps$ where $\eps > 0$, and let
	$X \in \R^{m \times n}$ have i.i.d.\ $N(0, 1)$ entries with
	$m > C\frac{n}{\eps^2}$ for some large enough constant $C$.  Then with probability $1-e^{-\Omega(\eps^2 m)}$
	the matrix $X$ will have the following property: for any
	$y = Xw^* + d + \zeta$ with 
	$\norm{\zeta}_0 \leq \eta m$, \[\wh{w} := \argmin_{w} \norm{y - Xw}_1 \] satisfies
	\[ \|\widehat{w} - w^*\|_2 \leq O\left(\frac{\|d\|_1}{\epsilon m}\right)\] }

\state{thm:main}

In the case where $w^*$ is not sparse, the reconstruction error is shown to be $O(\norm{d}_1/m)$ in $O(n)$ samples using essentially the same proof. 
\state{thm:main2}

\paragraph*{Robustness threshold $\eta_0$.}  We show in
Section~\ref{sec:lower} that the threshold $\eta_0$ in
Theorem~\ref{thm:main} is tight for the algorithm: for any
$\eta > \eta_0$, there exist problem instances where the algorithm
given by~\eqref{eq:l1min} is not robust.  It remains an open question
whether any polynomial time algorithm can be robust for all
$\eta < 0.5$.

\subsection{Proof outline}

Our main result follows from a simple analysis of the fact that for well-behaved matrices $X$, $\ell_1$ regression recovers from adversarial corruptions. In this section we consider the illustrative case where there is no dense noise, in the limit of infinitely many samples. Let $(X_g, y_g)$ and $(X_b, y_b)$ denote the submatrices of $(X, y)$ corresponding to the uncorrupted and corrupted samples respectively and let $\widehat{w}$ denote the solution of $\ell_1$ regression. By definition $\widehat{w}$ satisfies \[ \|X\widehat{w} - y\|_1 \leq \|X w^* - y\|_1.\] Partitioning these $1$-norms into terms corresponding to good and bad samples, we get
\begin{align*}
0 &\geq  \|X \widehat{w} - y\|_1 - \|X w^* - y\|_1\\
&=  \left(\| X_{g} \widehat{w} - y_g \|_1 - \| X_{g} w^* - y_g \|_1\right) + \left(\| X_{b} \widehat{w} - y_b \|_1 - \| X_b w^* - y_b \|_1 \right)
\end{align*}
Observe that since we have no dense noise, $X_g w^* = y_g$.  An application of the triangle inequality then results in
\begin{align*}
0 &\geq  \| X_g (\widehat{w} - w^*)\|_1 + \left(\| X_{b} \widehat{w} - y_b \|_1 - \| X_b w^* - y_b \|_1 \right)\\
&\geq  \| X_g (\widehat{w} - w^*)\|_1  -  \| X_b (\widehat{w} - w^*)\|_1 
\end{align*}
i.e. 
\begin{equation}\label{eq:no_dense_noise}
0 \geq \| X_g (\widehat{w} - w^*)\|_1 -  \|X_b(\widehat{w}-w^*)\|_1 
\end{equation}
We now show that as long as $\eta < \eta_0 - \epsilon$ for constant $\epsilon > 0$, the right hand side above is $\geq C(\epsilon)\|\widehat{w} - w^*\|_2$ for some $C(\epsilon) > 0$ whenever $X$ is a Gaussian matrix. This will force $\widehat{w} = w^*$.

Equation~\eqref{eq:no_dense_noise} is minimized when the adversary corrupts the entries with the largest value for $|\langle x_i, w^* - \wh{w}\rangle|$. 
For any vector $v$, observe that in the limit of infinitely many samples, the histogram of the entries of $Xv$ is the same as that of $N(0, \|v\|_2^2)$. Let $t$ be chosen such that $\frac{1}{\sqrt{2\pi}} \int_{t}^{\infty}  e^{-\frac{x^2}{2}} dx  = \frac{\eta}{2}$. This makes ~\eqref{eq:no_dense_noise} proportional to 

\begin{equation}\label{eq:proportional}
\|w^* - \widehat{w}\|_2 \left( \int_{0}^{t} x e^{-\frac{x^2}{2}} dx - \int_{t}^{\infty} x e^{-\frac{x^2}{2}} dx \right). 
\end{equation}

At $\eta = \eta_0$ the $\ell_1$ norms of the largest $\eta$ and $1-\eta$ fraction of samples drawn from a Gaussian distribution are equal, and hence~\eqref{eq:proportional} is $0$. If $\eta < \eta_0 - \epsilon$ for constant $\epsilon$, this difference is proportional to the standard deviation, i.e. $\|\widehat{w} - w^*\|_2$. Our proof proceeds by showing that a minor variant of this argument works even in the presence of dense noise, and in $O(k \log \frac{n}{k})$ samples the empirical $\ell_1$ norms involved in the proof are close to the $\ell_1$ norms of the Gaussian distribution.  


\subsection{Related Work}

\paragraph*{Classical robust statistics.}  The classical robust
statistics literature on regression (see \cite{huber}) has developed a
number of estimators with breakdown point 0.5 (i.e., that are robust
for any $\eta < 0.5$).  However, all such known estimators need time
exponential in the data dimension $n$; the results also typically do
not deal with sparsity in $w^*$ and have distributional assumptions on
the dense noise $d$.  On the other hand, the results in this
literature usually also protect against corruption in $X_i$, not just
$y_i$; the L1 estimator is not robust to such corruptions.

\paragraph*{Recent progress in robust statistics.}  
There has been a lot of progress in the last year in the field of
robust statistics, leading to polynomial time algorithms with positive
breakdown points that are robust to corruptions in both $X$ and
$y$~\cite{2018arXiv1Diakonikolas,SEVER,liu2018high,klivans2018efficient}.
However, these results all focus on the performance for small $\eta$
(often required to be less than a non-explicit constant), do not
consider sparse $w^*$, and have additional restrictions on the dense
noise (typically that it be i.i.d.\ Gaussian,
although~\cite{klivans2018efficient} is somewhat more general).  In
Section~\ref{sec:empirical} we empirically compare L1 regression to
the algorithm of~\cite{2018arXiv1Diakonikolas} for the dimension 1
case, and find that the algorithm seems to have the same breakdown
point $\eta_0$ as L1 minimization under corruptions to $y$.

\paragraph*{L1 minimization in statistics.}  Known as L1 minimization
or Least Absolute Deviation, the idea of minimizing $\norm{y - Xw}_1$
actually predates minimizing $\norm{y - Xw}_2$, originating in the
18th century with Boscovich and Laplace~\cite{LAD1st}.  It is widely known
to be more robust to outliers in the $y_i$. However the extent to
which this holds depends on the distribution of $X$.   Surprisingly, we
have not been able to find a rigorous analysis of L1 minimization
for Gaussian $X$ that simultaneously achieves these three features of our
analysis: (1) an estimate of the breakdown point $\eta_0$ under
corruptions to the $y_i$; (2) an extension of the algorithm to sparse
$w^*$; or (3) a tolerance for adversarial $d$, rather than with a
distributional assumption.

L1 minimization is typically dismissed in the statistics literature as
being ``inefficient'' in the sense that, if the noise $d$ is i.i.d.\
Gaussian, L1 minimization requires about 56\% more samples than least
squares~\cite{yu2017robust} to achieve the same accuracy.  However
from the typical perspective of theoretical computer science, in which
constant factors are less important than the avoidance of
distributional assumptions, we find that L1 minimization is a very
competitive algorithm.

\paragraph*{L1 minimization in compressed sensing.}  Our error bound of
$\norm{d}_1/m$ is always better than the traditional
$\norm{d}_2 / \sqrt{m}$ bound for compressed sensing. The two bounds
match up to constant factors if the noise has a consistent magnitude,
but our bound is significantly better if the noise is heavy tailed.
The fact that our bound can drop the top $\eta$ fraction of noise
elements makes the distinction even more pronounced.

\paragraph*{Robust regression in the presence of label corruptions.}
The past few years have featured a number of polynomial time
algorithms for the problem considered in this paper, of sparse
regression in the presence of sparse corruptions to the labels. As is
typical in compressed sensing, there are approaches based on convex
programming and on iterative methods.

One natural algorithm for the problem is to try to learn both the
(sparse) signal and (sparse) noise, treating this as a single
compressed sensing problem with the bigger ``measurement matrix'' of
$X$ atop a (scaled) identity matrix.  With scaling $1/\lambda$, the
standard L1 minimization approach to compressed sensing is equivalent
to the following algorithm: minimize $\|w\|_1 + \lambda \|\zeta\|_1$
subject to $\| Xw + \zeta - y \|_2 \leq \epsilon$.  If the adjoined
measurement matrix satisfies an RIP-like property, then $w$ (and
$\zeta$) will both be recovered.

Such an approach was first introduced in \cite{laska2009exact} with
$\lambda = 1$, giving an algorithm that could tolerate up to about
$\eta \approx 1/(\log n)$ fraction sparse corruptions. This was then
improved by \cite{CSLi2011} by setting
$\lambda = \frac{1}{\sqrt{\log(en/m)}}$, improving the breakdown
point $\eta$ to an unspecified constant; naively following the proof
would give a value below 1\%.  We suspect that this approach -- which
recovers $\zeta$ as well as $w$ -- does not have a breakdown point
close to $\eta_0$.

The second class of algorithms for the problem are based off iterative
hard thresholding, where in each iteration one ignores the samples
that make a large error with the $\ell_2$ minimizer.  In
\cite{bhatia2015robust} it was shown that without any dense noise,
this yields exact recovery with a breakdown point of $\eta <
0.015$. \cite{consistent_robust} provided an algorithm that can handle
dense Gaussian noise, but the perturbations are required to be
oblivious to the matrix $X$ and the breakdown point is $0.0001$.

Another line of work, including \cite{Robust_LASSO,novel,nguyen2013exact}, considers non-adversarial
corruption.  For example, if the corruptions are in random locations,
and the signs of the signal vector are random, then one can tolerate
corruption of nearly 100\% of the $y_i$~\cite{nguyen2013exact}.
Finally,~\cite{WLJ07} considers the (essentially equivalent) LASSO
version of our proposed algorithm~\eqref{eq:l1min}, and shows that it
is robust to i.i.d.\ heavy-tailed median-zero noise. 

Thus, for sparse regression with both adversarial corruption of the
labels and dense noise, no previous polynomial-time algorithm
had a breakdown point above $0.015$.  We improve that to $0.239$ with a simple algorithm.

\paragraph*{LP Decoding and Privacy}

Very closely related to our work is that of~\cite{dwork2007price},
which gets very similar results to our dense-case results
(Theorem~\ref{thm:main2}) in the service of a privacy application.
This work observes the same threshold $\eta_0$ as we do for the same
L1-regression algorithm, but with a somewhat weaker error guarantee
(requiring a bound on $\norm{d}_\infty$ not $\norm{d}_1$).
\cite{dwork2007price} also proves that if $X$ is i.i.d.\ $\pm 1$
rather than Gaussian, the breakdown point would be positive but
strictly below $\eta_0$.  The subsequent work~\cite{WXT10} also
observes that $\ell_p$ regression for $p < 1$ would yield greater
breakdown points than $\eta_0$ for Gaussian $X$, similar to our Section~\ref{sec:Lp}.

\section{Definitions and notation}\label{sec:notation}

We start by defining a notion of robustness that we will use later. A matrix $X$ is said to be $(\eta, q)$-robust if for any submatrix consisting of an $\eta$ fraction of the rows, the $\ell_q$ norm of the submatrix times a unit vector is upper bounded by a constant times $m^{1/q}$. Also, for any submatrix consisting of a $1-\eta$ fraction of the rows, the $\ell_q$ norm of the submatrix times a unit vector is \emph{lower bounded} by a constant times $m^{1/q}$. 
\begin{definition}
	A matrix $X \in \mathbb{R}^{m \times n}$ is said to be \emph{$(\eta,q)$-robust} with respect to $U \subset \mathbb{R}^n$ if there exist constants $S_{U,\eta}^{\max}$ and $S_{U, \eta}^{\min}$ satisfying the following conditions for all $v \in U$.
	\[\max_{\substack{{S \subset [m]}\\ |S| \leq \eta m}} \|(X v)_S\|^q_q \leq m \cdot S_{U, \eta}^{\max} \cdot \|v\|_2 \] 
	\[\min_{\substack{{S \subset [m]}\\|S| \geq (1-\eta) m} } \|(X v)_S\|^q_q \geq m \cdot S_{U,\eta}^{\min} \cdot \|v\|_2. \]
\end{definition}
We now define some notation. $S^{*}_{k, *}$ and $S^{*}_{\eta}$ will be used to refer to the robustness constants with respect to $k$-sparse vectors and $\mathbb{R}^n$ respectively. $v_T$ will denote the vector $v$ with all entries whose indices are outside $T$ set to $0$. 

We use $\Phi$ to denote the CDF of  $N(0,1)$. $B(\gamma)$ and $G(\gamma) $ will be used to refer to the $\ell_1$ norm of the largest (in absolute value) $\gamma$ fraction and the smallest $1-\gamma$ fraction with respect to the Gaussian distribution respectively, i.e. 
%


\[ B(\gamma) = \frac{2}{\sqrt{2 \pi}} \int_{\Phi^{-1}(1-\frac{\gamma}{2})}^{\infty} ze^{-\frac{z^2}{2}} dz = \sqrt{\frac{2}{\pi}}\left(e^{-(\erf^{-1}(1-\gamma))^2}\right) \]
and
\[ G(\gamma) = \frac{2}{\sqrt{2 \pi}} \int_{0}^{\Phi^{-1}(1-\frac{\gamma}{2})} ze^{-\frac{z^2}{2}} dz = \sqrt{\frac{2}{\pi}}\left(1-e^{-(\erf^{-1}(1-\gamma))^2}\right). \]

Define $\eta_0$ to be the largest $\eta$ such that $G(\eta) \geq B(\eta)$. Using the expressions above, one can solve for $\eta$ to get $\eta_0 = 2(1-\Phi(\sqrt{2\log2})) \approx 0.239$. 

Also, we will use $f(x) \lesssim g(x)$ to mean there are constants $X$ and $C$ such that $\forall x > X. |f(x)| \leq C |g(x)|$. 

\section{Robustness of Gaussian matrices} 

In the following lemma, we show that Gaussian matrices are $(\eta, 1)$-robust with constants in terms of $B(\cdot), G(\cdot)$ defined earlier. 

\begin{lemma}\label{lem:samp_comp}
	Let $X$ be an $m \times n$ Gaussian matrix, where $m \geq \frac{C}{\epsilon^2} \cdot \left( k \log \frac{en}{k\epsilon} + \log \frac{1}{\delta} \right)$ for a large enough constant $C$ and $\epsilon < 1$.  Then with probability $1-\delta$, $X$ is $(\eta, 1)$ robust with constants
	\[ S_{k, \eta}^{\min} = G(\eta - \epsilon) - \epsilon \]
	\[ S_{k,\eta}^{\max} = B(\eta + \epsilon) + \epsilon. \]
\end{lemma}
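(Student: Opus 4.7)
The plan is a standard net-and-concentration argument. By homogeneity of the defining inequalities in $v$, it suffices to prove the bounds for unit vectors $v$ with $\|v\|_2 = 1$, absorbing the scaling into the constants.

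\emph{Pointwise concentration.} First I would fix a unit vector $v$ and control the two quantities of interest. When $\|v\|_2 = 1$, the entries $z_i := \langle X_i, v\rangle$ are i.i.d.\ $N(0,1)$, so in expectation the top $\eta m$ entries of $|Xv|$ sum to $m \cdot B(\eta)$ and the bottom $(1-\eta)m$ entries sum to $m \cdot G(\eta)$. I would prove concentration via a threshold sandwich: for any fixed threshold $\tau$, both the truncated sum $\sum_i |z_i|\mathbbm{1}[|z_i| \geq \tau]$ (a sum of sub-exponentials) and the count $\sum_i \mathbbm{1}[|z_i| \geq \tau]$ (a sum of Bernoullis) concentrate around their Gaussian expectations by Bernstein's inequality, with deviation $O(\epsilon)$ at failure probability $e^{-\Omega(\epsilon^2 m)}$. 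Choosing $\tau = \Phi^{-1}(1 - (\eta + \epsilon/2)/2)$ produces a set of indices of size at most $\eta m$ whose sum upper-bounds the top-$\eta m$ sum, giving an upper bound of $B(\eta + \epsilon) + \epsilon$; the symmetric choice $\tau = \Phi^{-1}(1 - (\eta - \epsilon/2)/2)$ lower-bounds the bottom-$(1-\eta)m$ sum by $G(\eta - \epsilon) - \epsilon$.

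\emph{Net and Lipschitz extension.} Next I would union bound over an $\epsilon$-net of the $k$-sparse unit sphere. Covering each of the $\binom{n}{k}$ coordinate subspaces with a standard $(3/\epsilon)^k$-size $\ell_2$-net gives a total net of cardinality at most $(en/(k\epsilon))^{O(k)}$, so for $m \gtrsim (k \log(en/(k\epsilon)) + \log(1/\delta))/\epsilon^2$ the pointwise bound holds simultaneously for every net element with probability $1-\delta$. To extend from the net to an arbitrary $k$-sparse unit $v$, pick the nearest net element $v'$; then $u := v - v'$ is $2k$-sparse with $\|u\|_2 \leq \epsilon$, and both quantities change by at most $\|Xu\|_1/m$. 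This reduces to a one-sided $\ell_1$-RIP bound $\|Xu\|_1/m \leq O(\|u\|_2)$ uniformly over $2k$-sparse $u$, which I would establish by an analogous net argument at the same sample complexity (using $\mathbb{E}\|Xu\|_1 = m\sqrt{2/\pi}\|u\|_2$ and Bernstein). The resulting $O(\epsilon)$ slack from the net step is absorbed into the $\epsilon$-shifts of the target constants.

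\emph{Main obstacle.} I expect the subtlety to lie in the pointwise step: relating the order-statistic-based quantities to the population $B(\eta)$ and $G(\eta)$ requires simultaneously controlling two independent sources of fluctuation -- the number of entries above a chosen threshold and the $\ell_1$-mass of those entries -- and arranging for both to show up as $\epsilon$-shifts of comparable size in the final constants. The remaining components (covering number bounds, union bound, Lipschitz closure) are routine for Gaussian matrices but require careful bookkeeping so that the $\epsilon$'s introduced at the net and closure steps are compatible with those in the pointwise step, yielding the clean final constants $S_{k,\eta}^{\min} = G(\eta - \epsilon) - \epsilon$ and $S_{k,\eta}^{\max} = B(\eta + \epsilon) + \epsilon$.
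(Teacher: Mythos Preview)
Your proposal is correct and follows the same net-and-concentration strategy as the paper: pointwise control of the order-statistic sums via a threshold (the paper packages this as Fact~\ref{ell1_est}, proved with DKW plus truncated-Gaussian concentration rather than Bernstein, but the content is the same), followed by a union bound over a $\gamma$-net of the $k$-sparse sphere.

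The one substantive difference is in how the net is closed. You take a single nearest net point $v'$ and bound the perturbation by a separately-established uniform $\ell_1$-RIP upper bound $\|X(v-v')\|_1/m \lesssim \|v-v'\|_2$. The paper instead writes $u = \sum_{j=0}^t \gamma^j v_j + \gamma^{t+1} d$ as a geometric sum of net points and handles the corrections using $\widehat{B}_{v_j}(1-\eta)$ (so it needs the pointwise bound at both $\eta$ and $1-\eta$) together with a crude Cauchy--Schwarz bound $|\langle x_i, d\rangle| \leq \|x_i\|_2$ on the residual, which in turn requires a separate $\chi^2$-concentration event on the row norms. Your route is more modular and avoids the row-norm event; the paper's route keeps everything inside one argument but at the cost of that extra ingredient.
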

\begin{proof}
	Rearranging terms in the definition we see that we would like to show 
	\[\frac{1}{m} \max_{\substack{{S \subset [m]}\\ |S| \leq \eta m}} \left\|\left(X \cdot \frac{v}{\|v\|}\right)_S\right\|_1 \leq B(\eta + \epsilon) + \epsilon  \] 
	\[\frac{1}{m} \min_{\substack{{S \subset [m]}\\|S| \geq (1-\eta) m} } \left\|\left(X \cdot \frac{v}{\|v\|}\right)_S\right\|_1 \geq G(\eta - \epsilon) - \epsilon. \]
	Without loss of generality it is sufficient to prove the above for all ($k$-sparse) unit vectors. Let $x_i$ denote the $i^{th}$ row of $X$ and let $S_v = \{ \langle x_i, v \rangle \mid i \in m\}$. Note that $S_v$ look like samples from $N(0,1)$ for any fixed unit vector $v$. Before we continue, we define some notation. Let $ \widehat{G}_v(\eta)$ denote the  smallest possible $\ell_1$ norm of a subset of $S_v$ of size $(1-\eta)m$ and let $\widehat{B}_v(\eta)$ be defined similarly to denote the largest possible $\ell_1$ norm of any subset of size $\eta m$. What we want to prove is
	\[ \widehat{B}_v(\eta) < B(\eta + \epsilon) + \epsilon\]
	and
	\[ \widehat{G}_v(\eta) > G(\eta - \epsilon) - \epsilon \]
	for all $k$-sparse unit vectors $v$. To do this, we will first prove that the relationship holds with high probability for all $k$-sparse unit vectors in a fine enough net on the sphere, and then say that the deviation cannot be very large for points outside the net. 
	
	We will need the following fact proven in Appendix \ref{appendix:A}. Here $\widehat{G}(\eta)$ refers to the $\ell_1$ norm of the smallest $(1-\eta)$ fraction of $S$ with respect to the uniform distribution and $\widehat{B}(\eta)$ refers to the $\ell_1$ norm of the largest $\eta$ fraction of $S$ with respect to the uniform distribution. 
	\begin{fact}\label{ell1_est}
		Let $S = \{ z_1, \dots, z_m \}$ be i.i.d. samples from  $N(0,1)$.Then with probability $1-O\left(e^{\frac{m\epsilon^2}{2}}\right)$,
		\[ \widehat{G}(\eta) > G(\eta - \epsilon) - \epsilon\]
		and
		\[ \widehat{B}(\eta) < B(\eta+\epsilon) + \epsilon \]
	\end{fact}
	
	For now, let $v$ be a fixed vector and let $\tau > 0$ be a parameter. Define the following bad events  
	\[ \mathcal{G}_{v} =\left \{\widehat{G}_v(\eta) < G(\eta - \epsilon) - \epsilon \right \}\] 
	\[\mathcal{B}_v = \left \{ \widehat{B}_v(1-\eta) > B(1-\eta + \epsilon) + \epsilon \right\}\]
	\[\mathcal{N} =  \left \{ \forall i \in [m], \|x_i\|_2 > \sqrt{n + \tau} \right \}.\]
	These events correspond to either the $\ell_1$ norms of the smallest $1-\eta$ fraction or the largest $\eta$ fraction not being close enough to the expectation, or the $2$-norm of the Gaussian vectors not being close enough to the expectation. Applications of Fact \ref{ell1_est} for $\eta$ and $1-\eta$, and concentration for $\chi^2$ random variables then implies
	\[ \Pr\left(\mathcal{G}_v \lor \mathcal{B}_v \right) \lesssim e^{\frac{m\epsilon^2}{2}}\]
	and
	\[ \Pr\left( \mathcal{N} \right) \lesssim  me^{-\frac{n\tau^2}{8}}.\]
	For a unit $\ell_2$ ball in a $k$-dimensional subspace of $\mathbb{R}^n$, there exists a $\gamma$-net of size $(1+\frac{2}{\gamma})^k < (\frac{3}{\gamma})^k$. Let $C$ be the union of these nets over all subspaces corresponding to $k$-sparse vectors. A union bound now gives us 
	\[ \Pr \left( \exists v \in C :\mathcal{G}_v \lor \mathcal{B}_v \right) \lesssim  \binom{n}{k} \left(\frac{3}{\gamma}\right)^k \left(e^{\frac{m\epsilon^2}{2}}\right)  \]
	
	We will now move from the net to the union of all $k$-sparse unit balls. Let $u \in \mathbb{R}^n$ be a $k$-sparse unit vector. Then for any $t$, there exist $v_0, \dots, v_t \in C$ having the same support as $u$ and a unit vector $d$ also having the same support as $u$, such that \[u = \sum_{i=0}^{t} \gamma^i v_i + \gamma^{t+1} d.\] This follows from choosing $v_0$ to be the closest point in the net to $u$, choosing $v_1$ to be the closest point in the net to $(u - v_0)/\gamma$ and so on. 
	
	Let $U \subset [m]$ be the set of indices of $X$ corresponding to the smallest (in absolute value) $(1-\eta)$ fraction of elements of  $S_u$.  Conditioning on the bad events not happening (i.e. on the event $\overline{\left( \exists v \in C. \mathcal{G}_v \lor \mathcal{B}_v \right) \lor\mathcal{N} }$) we see
	\begin{align*}
	\widehat{G}_u(\eta) &= \frac{1}{m}\sum_{i \in U} \left| \left \langle x_i, \sum_{j=0}^{t} \gamma^j v_j + \gamma^{t+1} d \right\rangle \right|\\
	&\geq \frac{1}{m} \sum_{i \in U} |\langle x_i, v_0\rangle | - \frac{1}{m} \sum_{i \in U} \left|\left\langle x_i, \sum_{j=1}^t \gamma^j v_j + \gamma^{t+1} d\right\rangle\right|\\
	&\geq \frac{1}{m} \sum_{i \in U} |\langle x_i, v_0\rangle | -  \sum_{j=1}^t \left(\frac{\gamma^j }{m} \sum_{i \in U} |\langle x_i,v_j\rangle| \right)  - \frac{\gamma^{t+1}}{m} \sum_{i \in U} |\langle x_i, d\rangle|\\
	&\geq \widehat{G}_{v_0}(\eta) - \sum_{j=1}^{t} \widehat{B}_{v_j}(1-\eta)\gamma^j- \frac{\gamma^{t+1}}{m} \sum_{i \in U}\|x_i\|\|d\|\\
	&\geq (G(\eta - \epsilon) - \epsilon) -  (B(1-\eta +\epsilon) + \epsilon) \sum_{j=1}^t \gamma^j - \gamma^{t+1} \sqrt{n+\tau}\\
	&\geq G(\eta - \epsilon) - \epsilon - 2\gamma (B(1-\eta +\epsilon) + \epsilon)  - \gamma^{t+1} \sqrt{n+\tau}\\
	&\geq  G(\eta - \epsilon) - \epsilon - 4\gamma - \gamma^{t+1} \sqrt{n+\tau}\\
	&\geq G(\eta - \epsilon) - 2\epsilon
	\end{align*}
	The first few inequalities are a consequence of the definitions of $G_v$ and $B_v$ and the third inequality follows from an application of Cauchy-Schwartz. The second to last inequality follows by noting that $\epsilon < 1$ and $B(1-\eta +\epsilon)  < 1$, and the final inequality follows by setting $t > \log \frac{n+\tau}{\epsilon}$ and $\gamma = \frac{\epsilon}{10}$. This means
	\begin{align*} 
	&\Pr \left(\text{There exists a $k$-sparse unit $v$ such that } G(\eta - \epsilon) - \widehat{G}_v(\eta) > 2\epsilon \right)\\ &\lesssim  \binom{n}{k} \left(\frac{30}{\epsilon}\right)^k \left( e^{-\frac{m\epsilon^2}{2}}\right) + me^{-\frac{n \tau^2}{8}}\\
	&\lesssim e^{k \log \frac{en}{k} + k \log \left(\frac{30}{\epsilon}\right) -\frac{m\epsilon^2}{2}} + m e^{-\frac{n \tau^2}{8}}
	\end{align*}
	Setting $\tau = 10mn\log \frac{1}{\delta}$ and $m \gtrsim \frac{1}{\epsilon^2} \cdot \left( k \log \frac{en}{k\epsilon}  + \log \frac{1}{\delta} \right)$ makes the bound on the probability above $\lesssim \delta$. The result now follows by rescaling $\epsilon$ and $\delta$ appropriately. 
\end{proof} 

The previous lemma showed that the Gaussian matrix is robust with respect to truly $k$-sparse vectors. However, we will need to show that it is robust with respect to $(w^* - \widehat{w})$, i.e. the difference between the true vector and the solution of $\ell_1$ regression. To do this, we will use a standard shelling argument to transfer upper and lower bounds for the restricted eigenvalues over $(1+\alpha^2)k$-sparse vectors to the restricted eigenvalues over the cone $V_S = \{ v \in \mathbb{R}^n \mid \Delta +\|v_S\|_1 \geq \|v_{\overline{S}}\|_1 \}$ for some $S$ satisfying $|S| = k$, which is the cone in which this difference lies. This is the content of the following lemma from Appendix~\ref{appendix:B}. 

\define[Shelling Argument]{lem:shelling}{Lemma}{ Let $A \in \mathbb{R}^{m \times n}$ satisfy 
	\[ L \|v\|_2 \leq \|A v\|_1 \leq U\|v\|_2\]
	for all $(1+\alpha^2)k$-sparse vectors $v$. If $S \subset [m]$ is fixed and of cardinality $k$, then $A$ satisfies 
	\[  \frac{L}{1+\alpha} \left( \alpha  -  \frac{U}{L} \right)\|v\|_2 -  \frac{2U\Delta}{ \alpha \sqrt{k}}  \leq \|Av\|_1 \leq U \left(1 + \frac{1}{\alpha}\right) \|v\|_2 +  \frac{U \Delta}{\alpha \sqrt{k}} \]
	for all \[v \in V_S = \{ v \in \mathbb{R}^n \mid \Delta +\|v_S\|_1 \geq \|v_{\overline{S}}\|_1 \}. \] 
}
\state{lem:shelling}

We can now prove the main lemma which will be used to say that Gaussian matrices are robust with respect to the vector $w^* - \wh{w}$. 

\begin{lemma}[Main Lemma]\label{lem:gauss_robust}
	Let $\eta < \eta_0$,  $\epsilon \in (0,1)$, $\alpha > 1$ and $\Delta > 0$ be free parameters, and let $S \subset [n]$ be a fixed subset of size $k$. Let $X \in \mathbb{R}^{m \times n}$ be a matrix with entries drawn from $N(0,1)$ and suppose $m >  \frac{C}{\epsilon^2} \cdot \left( k\alpha^2 \log \frac{en}{k\alpha^2\epsilon}  + \log \frac{1}{\delta} \right)$  for some large enough constant $C$. Then with probability $1-\delta$, for all \[v \in V_S = \{ v \in \mathbb{R}^n \mid \Delta +\|v_S\|_1 \geq \|v_{\overline{S}}\|_1 \}\] and for all $T \subset [m]$ such that $|T| \leq \eta m$,
	\[  \|(Xv)_{\overline{T}} \|_1 - \|(Xv)_T\|_1 \gtrsim m\|v\|_2  \left( \left(G(\eta - \epsilon) - B(\eta + \epsilon)  - 2\epsilon\right) -  \frac{1}{\alpha} \right) - \frac{m\Delta}{\alpha\sqrt{k}}. \]

	%
\end{lemma}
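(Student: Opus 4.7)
The plan is to reduce the claim to Lemma~\ref{lem:samp_comp} (invoked at sparsity $(1+\alpha^2)k$) via a shelling decomposition of the same flavor as Lemma~\ref{lem:shelling}, but carried out on the signed quantity $\|(Xv)_{\overline{T}}\|_1-\|(Xv)_T\|_1$ instead of on $\|Xv\|_1$. First I would apply Lemma~\ref{lem:samp_comp} with sparsity parameter $(1+\alpha^2)k$ and the given $\epsilon$; the sample-size hypothesis of the present lemma is exactly what that call requires (since $\alpha>1$, $(1+\alpha^2)k$ is within a constant factor of $\alpha^2 k$). With probability $1-\delta$, this yields, uniformly over every $(1+\alpha^2)k$-sparse $v$ and every $T\subset[m]$ with $|T|\leq \eta m$,
\[
\|(Xv)_T\|_1 \leq m\|v\|_2(B(\eta+\epsilon)+\epsilon), \qquad \|(Xv)_{\overline{T}}\|_1 \geq m\|v\|_2(G(\eta-\epsilon)-\epsilon),
\]
and hence $\|(Xv)_{\overline{T}}\|_1-\|(Xv)_T\|_1 \geq m L'\|v\|_2$ for $L':=G(\eta-\epsilon)-B(\eta+\epsilon)-2\epsilon$. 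The same event (together with standard Gaussian concentration, as used in the proof of Lemma~\ref{lem:samp_comp}) also gives $\|Xv\|_1 \leq mU\|v\|_2$ for an absolute constant $U=O(1)$, uniformly over $(1+\alpha^2)k$-sparse $v$.

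Next I would shelling-decompose an arbitrary $v\in V_S$. Let $S_0\subset \overline{S}$ index the largest $\alpha^2 k$ coordinates of $v_{\overline{S}}$ by magnitude, set $v^{(0)} := v_{S\cup S_0}$, and partition the remaining support into blocks of size $\alpha^2 k$ in descending magnitude, with $u^{(j)}$ the restriction of $v$ to the $j$-th such block. Every piece is $(1+\alpha^2)k$- or $\alpha^2 k$-sparse. The standard shelling estimate $\|u^{(j)}\|_2 \leq \|u^{(j-1)}\|_1/(\alpha\sqrt{k})$, together with the cone hypothesis $\|v_{\overline{S}}\|_1 \leq \|v_S\|_1+\Delta \leq \sqrt{k}\|v\|_2+\Delta$, produces
\[
\sum_{j\geq 1}\|u^{(j)}\|_2 \;\leq\; \frac{\|v_{\overline{S}}\|_1}{\alpha\sqrt{k}} \;\leq\; \frac{\|v\|_2}{\alpha}+\frac{\Delta}{\alpha\sqrt{k}},
\]
and, by the triangle inequality, $\|v^{(0)}\|_2 \geq (1-1/\alpha)\|v\|_2-\Delta/(\alpha\sqrt{k})$.

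Finally I would combine the two steps. Splitting $(Xv)_{T}$ and $(Xv)_{\overline{T}}$ along the decomposition and applying the triangle inequality in opposite directions gives
\[
\|(Xv)_{\overline{T}}\|_1-\|(Xv)_T\|_1 \;\geq\; \bigl(\|(Xv^{(0)})_{\overline{T}}\|_1-\|(Xv^{(0)})_T\|_1\bigr) \;-\; \sum_{j\geq 1}\|Xu^{(j)}\|_1,
\]
using that $\|(Xu^{(j)})_T\|_1+\|(Xu^{(j)})_{\overline{T}}\|_1=\|Xu^{(j)}\|_1$. Inserting the Step~1 bounds lower bounds this by $mL'\|v^{(0)}\|_2 - mU\sum_{j\geq 1}\|u^{(j)}\|_2$, and inserting the Step~2 estimates gives
\[
m\|v\|_2\bigl[L'-(L'+U)/\alpha\bigr] \;-\; m(L'+U)\Delta/(\alpha\sqrt{k}),
\]
which, because $L'+U=O(1)$, matches the claimed bound with the constants absorbed into $\gtrsim$. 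The main obstacle is pinning down the correct sparsity level for the uniform sparse bound: one must invoke Lemma~\ref{lem:samp_comp} at sparsity $(1+\alpha^2)k$ rather than at $k$, which is exactly what drives the $\alpha^2$ factor in the sample-complexity hypothesis of the lemma being proved. Everything after that is a mechanical shelling calculation, and the cone condition enters at exactly one place, to convert $\|v_{\overline{S}}\|_1$ into a multiple of $\|v\|_2$ plus a $\Delta/\sqrt{k}$ term.
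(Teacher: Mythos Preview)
Your proposal is correct and follows essentially the same approach as the paper: invoke Lemma~\ref{lem:samp_comp} at sparsity $(1+\alpha^2)k$ to obtain the sparse-vector bounds, then use a shelling decomposition to extend to the cone $V_S$. The only difference is organizational---you carry out the shelling directly on the signed quantity $\|(Xv)_{\overline{T}}\|_1-\|(Xv)_T\|_1$, whereas the paper applies Lemma~\ref{lem:shelling} as a black box to the $\eta$-row and $(1-\eta)$-row submatrices separately and then subtracts; both routes use the same ingredients and arrive at the same bound up to the absolute constants absorbed by $\gtrsim$.
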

\begin{proof}
	
	The matrix $X$ is both $(\eta,1)$ and $(1-\eta, 1)$ robust for all $k(1+\alpha^2)$-sparse vectors. An application of Lemma 3.3 for any submatrix $A$ of $X$ consisting of an $\eta$ fraction of it's rows gives us
	\[  \|Av\|_1 \leq mS^{\max}_{k(1+\alpha^2), \eta}\left(1+\frac{1}{\alpha} \right)\|v\|_2 + \frac{mS^{\max}_{k(1+\alpha^2), \eta} \Delta}{\alpha \sqrt{k}}.\]
	This proves 
	\[ \max_{\substack{{S \subset [m]}\\ |S| \leq \eta m}} \|(Xv)_S\|_1 \leq mS^{\max}_{k(1+\alpha^2), \eta}\left(1+\frac{1}{\alpha} \right)\|v\|_2 + \frac{mS^{\max}_{k(1+\alpha^2), \eta} \Delta}{\alpha \sqrt{k}} \]
	A similar application proves that for any matrix $A$ consisting of a $(1-\eta)$ fraction of the rows of $X$, we get 
	\[ \|A v\|_1 \geq \frac{mS^{\min}_{k(1+\alpha^2), \eta}}{1+\alpha}\left(\alpha-\frac{S_{k(1+\alpha^2), 1-\eta}^{\max} }{S^{\min}_{k(1+\alpha^2), \eta}} \right)\|v\|_2 -  \frac{2mS^{\max}_{k(1+\alpha^2), \eta}\Delta}{\alpha \sqrt{k}} \]
	i.e. 
	\[ \min_{\substack{{S \subset [m]}\\|S| \geq (1-\eta) m} } \|(Xv)_{\overline{S}}\|_1 \geq  \frac{mS^{\min}_{k(1+\alpha^2), \eta}}{1+\alpha}\left(\alpha-\frac{S_{k(1+\alpha^2), 1-\eta}^{\max} }{S^{\min}_{k(1+\alpha^2), \eta}} \right)\|v\|_2 -  \frac{2mS^{\max}_{k(1+\alpha^2), \eta}\Delta}{\alpha \sqrt{k}} \]
	To complete the proof, we now estimate the parameters involved. If $\eta < \eta_0$ and $\epsilon < 1$ and the underlying matrix is an $m \times n$ Gaussian matrix where $m \geq \frac{C}{\epsilon^2} \cdot \left( k\alpha^2 \log \frac{en}{k\alpha^2\epsilon}  + \log \frac{1}{\delta} \right)$, Lemma \ref{lem:samp_comp} yields 
	\[S_{k(1+\alpha^2), 1-\eta}^{\max} < B(1-\eta+\epsilon) + \epsilon < 1 + 1 = 2.\]
	Similar applications of Lemma \ref{lem:samp_comp} give us that $S_{k(1+\alpha^2), \eta}^{\max}$ and $\frac{S_{k(1+\alpha^2), 1-\eta}^{\max} }{S^{\min}_{k(1+\alpha^2), \eta}}$ are also upper bounded by constants. This results in the following bounds
	\[  \max_{\substack{{S \subset [m]}\\ |S| \leq \eta m}} \|(Xv)_S\|_1 \leq mS^{\max}_{k(1+\alpha^2), \eta}\|v\|_2\left(1+\frac{1}{\alpha} \right) +  \frac{2m\Delta}{\alpha\sqrt{k}},\]
	\[ \min_{\substack{{S \subset [m]}\\|S| \geq (1-\eta) m} } \|(Xv)_{S}\|_1 \geq  mS^{\min}_{k(1+\alpha^2), \eta}\|v\|_2 \left( \frac{\alpha - C}{\alpha+ 1}\right) - \frac{4m\Delta}{\alpha\sqrt{k}}.\]
	By taking the difference of the above inequalities and simplyfing, we get the following for any $T \subset [m]$ such that $|T| \leq \eta m$ 
	\begin{align*} 
	&\|(Xv)_{\overline{T}} \|_1 - \|(Xv)_T\|_1 \\&\geq m\|v\|_2 \left( S^{\min}_{k(1+\alpha^2), \eta} \left( \frac{\alpha - C}{\alpha+ 1}\right)
	- S^{\max}_{k(1+\alpha^2), \eta}\left(1+\frac{1}{\alpha} \right)	\right) -   \frac{6m\Delta}{\alpha \sqrt{k}} \\
	&\geq m\|v\|_2 \left( S^{\min}_{k(1+\alpha^2), \eta} - S^{\max}_{k(1+\alpha^2), \eta} - \left(  S^{\min}_{k(1+\alpha^2), \eta} + S^{\max}_{k(1+\alpha^2), \eta} \right) \frac{C+1}{\alpha} \right) -   \frac{6m\Delta}{\alpha \sqrt{k}}\\ 
	&\geq m\|v\|_2 \left( \left(G(\eta - \epsilon) - B(\eta + \epsilon)  - 2\epsilon\right) - \frac{4(C+1)}{\alpha} \right) -  \frac{6m\Delta}{\alpha \sqrt{k}}\\
	&\geq m\|v\|_2 \left( \left(G(\eta - \epsilon) - B(\eta + \epsilon)  - 2\epsilon\right) - \frac{1}{\alpha} \right) -  \frac{m\Delta}{\alpha \sqrt{k}}\\
	\end{align*}
	
\end{proof}


\section{Proof of main theorem}

\restate{thm:main}
\begin{proof}
	Let $X_g$ and $X_b$ denote $X$ restricted to the rows that are not corrupted, and to the rows that are corrupted respectively. Let $y_g$ and $y_b$ denote the corresponding $y$ terms. By the definition of $\widehat{w}$ and noting that $w^*$ is feasible for the program,
	\begin{align*}
	0 &\geq  \|X \widehat{w} - y\|_1 - \|X w^* - y\|_1\\
	&=  \left(\| X_{g} \widehat{w} - y_g \|_1 - \| X_{g} w^* - y_g \|_1\right) + \left(\| X_{b} \widehat{w} - y_b \|_1 - \| X_b w^* - y_b \|_1 \right) \\
	&\geq \| X_g (\widehat{w} - w^*)\|_1 - 2 \| X_g w^* - y_g \|_1  + \left(\| X_{b} \widehat{w} - y_b \|_1 - \| X_b w^* - y_b \|_1 \right)\\
	&\geq \| X_g (\widehat{w} - w^*)\|_1 - 2 \| X_g w^* - y_g \|_1 -  \|X_b(\widehat{w}-w^*)\|_1
	\end{align*}
	Where the second equality follows from  $\| X v - y \|_1 = \| X_{g} v - y_g \|_1 + \| X_{b} v - y_b \|_1$, and the inequalities are just applications of the triangle inequality. Rearranging terms now gives us
	\begin{equation} \label{eqn:main}
	2 \| X_g w^* - y_g \|_1 \geq \| X_g (\widehat{w} - w^*)\|_1 -  \|X_b(\widehat{w}-w^*)\|_1
	\end{equation}
	Let $\widehat{w}= w^* + h$ and let $S$ be the support of $w^*$. Then
	\begin{align*}
	\lambda &\geq \|\widehat{w}\|_1  \\
	&= \|h + w^*\|_1 \\
	&\geq \|w^*\|_1 + \|h_{\overline{S}}\|_1 - \|h_S\|_1 \\
	\implies (\lambda - \|w^*\|_1) + \|h_S\|_1 &\geq  \|h_{\overline{S}}\|_1 
	\end{align*}
	Setting $\Delta = (\lambda - \|w^*\|_1)$ and $T$ to be the set of corrupted indices in Lemma \ref{lem:gauss_robust} implies that if $m \gtrsim  \frac{1}{\epsilon^2} \cdot \left( k\alpha^2 \log \frac{en}{k\alpha^2\epsilon}  + \log \frac{1}{\delta} \right)$, then with probability $1-\delta$
	\begin{align*}
	& \| X_g (\widehat{w} - w^*)\|_1 -  \|X_b(\widehat{w}-w^*)\|_1 \\
	&= \|(Xh)_{\overline{T}} \|_1 - \|(Xh)_T\|_1\\
	&\gtrsim m\|h\|_2 \left(G\left(\eta - \frac{\epsilon}{2}\right) - B\left(\eta + \frac{\epsilon}{2}\right)  -\epsilon - \frac{1}{\alpha} \right) - \frac{m(\lambda - \|w^*\|_1)}{\alpha \sqrt{k}} 
	\end{align*}  
	Combining this with~\eqref{eqn:main}, as long as the coefficient of $\|h\|_2$ is positive, we get
	\begin{equation}\label{eq:main2}
	\|\widehat{w} - w^*\|_2 \lesssim \frac{1}{\left(G\left(\eta - \frac{\epsilon}{2}\right) - B\left(\eta + \frac{\epsilon}{2}\right)  -\epsilon - \frac{1}{\alpha} \right)} \left( \frac{\|d\|_1}{m} + \frac{(\lambda - \|w^*\|_1)}{\alpha \sqrt{k}}\right)
	\end{equation}
	It turns out 
	\[\left(G\left(\eta - \frac{\epsilon}{2}\right) - B\left(\eta + \frac{\epsilon}{2}\right)  - \epsilon \right) \gtrsim \epsilon.\] 
	This follows by a simple lower bound via the Taylor expansion of $1 - 2e^{-(\erf^{-1}((1-\eta_0) + x))}$ around $x = 0$. 
	\begin{align*}
	G\left(\eta - \frac{\epsilon}{2}\right) - B\left(\eta + \frac{\epsilon}{2}\right)  - \epsilon &\geq \sqrt{\frac{2}{\pi}}\left(1 - 2e^{-(\erf^{-1}(1-\eta + \frac{\epsilon}{2}))^2} \right) - \epsilon \\
	&= \sqrt{\frac{2}{\pi}}\left(1 - 2e^{-(\erf^{-1}(1-\eta_0 + (\epsilon + \frac{\epsilon}{2})))^2} \right) - \epsilon \\
	&\geq 3 \sqrt{2 \log 2} \cdot \epsilon - \epsilon\\
	&\gtrsim \epsilon
	\end{align*}
	i.e.
	\begin{equation}\label{eq:G-B}
	G\left(\eta - \frac{\epsilon}{2}\right) - B\left(\eta + \frac{\epsilon}{2}\right)  - \epsilon - \frac{1}{\alpha} \gtrsim \eps- \frac{1}{\alpha}
	\end{equation}
	
	Substituting our terms back into \eqref{eq:main2} gives us,
	\[  O\left(\frac{1}{\epsilon - \frac{1}{\alpha}} \cdot \frac{\|d\|_1}{m} + \frac{\lambda -\|w^*\|_1}{\alpha \sqrt{k}} \right) \geq \|\widehat{w} - w^*\|_2. \]
\end{proof}
We also note that in the case that $w^*$ is not sparse, one can directly use Lemma \ref{lem:samp_comp} once we get to \eqref{eqn:main} and continue the proof from there. This results in the following theorem. 

\restate{thm:main2}
Note that if there is no dense noise (i.e. $d = 0$), the above theorem immediately gives exact recovery when the fraction of corruptions is  $\eta < \eta_0 - \eps$. 


\section{$\ell_p$ regression for $0 < p < 1$}\label{sec:Lp}
Define $\ell_p$ regression to be the problem of recovering a signal by minimizing the $p^{th}$ power of the $\ell_p$ norm, i.e. 
\[ \widehat{w} = \argmin_{v} \sum_{i=1}^m |\langle x_i, v \rangle - y_i|^{p}.\]
Observe that $0 < p < 1$ implies
\[ \left(\sum_i a_i\right)^p \leq \sum_i a_i^{p}. \] This allows a proof similar to that of Theorem \ref{thm:main} to go through. We make the following claim. 
\begin{claim}\label{clm:l_p_dense_adversarial}
	
	Let $X$ be an $(\eta, p)$-robust matrix where $p \in (0,1]$. Then for any $\eta < \alpha$ the solution of $\ell_p$ regression, $\widehat{w}$ satisfies
	
	\[ \frac{1}{(S^{\min}_{\eta} - S^{\max}_{\eta})}  \cdot \frac{\|d\|_p}{m} \gtrsim \|\widehat{w} - w^*\|_2^p, \] 
	where $\|d\|_p = \sum_{i=1}^m |d_i|^p$ and $\alpha$ is the threshold below which $(S^{\min}_{\alpha} - S^{\max}_{\alpha}) > 0$ begins to hold. 
\end{claim}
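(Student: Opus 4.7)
The plan is to mirror the proof of Theorem~\ref{thm:main} almost step by step, replacing the use of the ordinary triangle inequality $\|a+b\|_1 \leq \|a\|_1 + \|b\|_1$ with the $p$-subadditivity inequality
\[ (a+b)^p \leq a^p + b^p \quad \text{for } a,b \geq 0 \text{ and } p \in (0,1], \]
which implies both $\sum_i |u_i + v_i|^p \leq \sum_i |u_i|^p + \sum_i |v_i|^p$ and the ``reverse'' bound $|u|^p - |v|^p \leq |u-v|^p$. The entire argument of Theorem~\ref{thm:main} never used anything beyond subadditivity and optimality, so once those inequalities are in place the rest is mechanical.

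Concretely, I would start from the optimality condition $\sum_i |\langle x_i,\widehat{w}\rangle - y_i|^p \leq \sum_i |\langle x_i, w^*\rangle - y_i|^p$, split the sum into good indices $G$ (where $y_i = \langle x_i,w^*\rangle + d_i$) and bad indices $B$, and set $h := \widehat{w}-w^*$. On the good block I would use $|\langle x_i, h\rangle - d_i|^p \geq |\langle x_i, h\rangle|^p - |d_i|^p$ to get
\[ \sum_{i\in G} |\langle x_i,\widehat{w}\rangle - y_i|^p - \sum_{i\in G}|\langle x_i, w^*\rangle - y_i|^p \geq \|X_g h\|_p^p - 2\|d_g\|_p^p, \]
and on the bad block I would use $|\langle x_i, w^*\rangle - y_i|^p - |\langle x_i,\widehat{w}\rangle - y_i|^p \leq |\langle x_i, h\rangle|^p$ to get a lower bound of $-\|X_b h\|_p^p$. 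Combining these two estimates with the optimality condition yields
\[ 2\|d\|_p \;\geq\; \|X_g h\|_p^p - \|X_b h\|_p^p, \]
which is exactly the $p$-analogue of equation~\eqref{eqn:main}.

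To finish, I would invoke the $(\eta, p)$-robustness of $X$. Since the adversary maximises $\|X_b h\|_p^p$ by choosing the $\eta m$ indices of largest $|\langle x_i, h\rangle|$, the difference $\|X_g h\|_p^p - \|X_b h\|_p^p$ is bounded below by $m\,(S^{\min}_\eta - S^{\max}_\eta)\,\|h\|_2$ (with the normalisation convention used in the definition of $(\eta,p)$-robustness). Under the assumption $\eta < \alpha$ the coefficient $S^{\min}_\eta - S^{\max}_\eta$ is strictly positive, and dividing through gives the claimed bound
\[ \|h\|_2^p \;\lesssim\; \frac{1}{S^{\min}_\eta - S^{\max}_\eta}\cdot\frac{\|d\|_p}{m}. \]

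The main obstacle is conceptual rather than computational: one has to verify that every place where the Theorem~\ref{thm:main} proof applied the triangle inequality can in fact be replaced by a one-sided subadditivity inequality in the correct direction, since $|\cdot|^p$ for $p<1$ is concave and several of the convenient identities for norms no longer hold. In particular the ``good block'' step lower-bounds $|\langle x_i,h\rangle - d_i|^p$ by $|\langle x_i,h\rangle|^p - |d_i|^p$, whereas the ``bad block'' step upper-bounds an analogous expression; these have to be checked carefully to make sure the signs line up. Once this bookkeeping is done, the robustness step is essentially identical to the one in Theorem~\ref{thm:main} and the proof concludes immediately.
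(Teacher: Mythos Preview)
Your proposal is correct and is precisely the approach the paper intends: the paper does not give a separate proof of this claim, but only remarks that the subadditivity inequality $(\sum_i a_i)^p \le \sum_i a_i^p$ for $0<p<1$ ``allows a proof similar to that of Theorem~\ref{thm:main} to go through,'' which is exactly what you carry out. Your bookkeeping on the good and bad blocks is right, and the remaining step is the direct application of the $(\eta,p)$-robustness definition; there is nothing missing.
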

\begin{figure}
	\centering
	\includegraphics[width=0.48\textwidth]{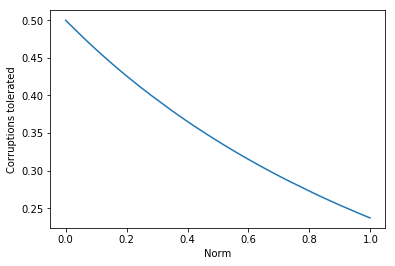}
	\captionsetup{aboveskip=-5pt}
	\caption{As the norm goes to $0$, in the limit of having infinite samples, $\ell_p$ regression can tolerate almost half the samples being corrupted.}
	\label{fig:norm_vs_corruptions}
\end{figure}
If $X$ is a Gaussian matrix, then as $p \rightarrow 0$, in the limit of a large number of samples, the value of $\eta$ at which the condition \[ (S^{\min}_{\eta} - S^{\max}_{\eta}) > 0\] begins to hold goes from $\eta_0$ to $0.5$. We plot the breakdown point against the norm in Figure \ref{fig:norm_vs_corruptions}. Unfortunately, $\ell_p$ regression in general seems to be NP-hard as well as approximation resistant.

\section{Lower bounds}\label{sec:lower}
In this section, we show that for the case of adversarial dense noise our results are tight for the $\ell_1$ regression algorithm. Recall our notation: $X$ is the matrix of $x_i$, $y = X w^* + \zeta + d$ where $\|\zeta\|_0 \leq \eta m$ and $d$ is the dense noise and $v_T$ denotes the vector $v$ with all entries with indices outside $T$ set to $0$.
\begin{theorem} 
	Let $m \gtrsim \frac{n}{\eps^2}$ and $0 < \eps < 0.2$, 
	\begin{itemize}
		\item[1.] If $\eta > \eta_0 + \eps$ and $d = 0$ (i.e. there is no dense noise), then there exists a choice for $\zeta$ such that $\ell_1$ regression does not exactly recover the original signal vector.
		\item[2.] Even if $\zeta = 0$ (i.e. there are no sparse corruptions),  there exists a choice for $d$ such that the solution of $\ell_1$ regression, $\widehat{w}$, satisfies \[ \|\widehat{w} - w^*\|_2 \gtrsim \frac{\|d\|_1}{m} \] 
	\end{itemize}
	
\end{theorem}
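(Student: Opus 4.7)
The plan is to handle the two parts separately via explicit constructions of the adversary's noise, both exploiting the tightness of the threshold $\eta_0$ at which $B(\eta_0) = G(\eta_0)$.

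For Part 1, I fix an arbitrary unit direction $h$ (say $h = e_1$), take $w^* = 0$, and exhibit a $\zeta$ so that the candidate $\wh{w} = h$ strictly beats $w^*$ in the $\ell_1$ objective. Let $T \subset [m]$ be the indices of the $\eta m$ largest values of $|(Xh)_i|$ and set $\zeta_i = (Xh)_i$ for $i \in T$, $\zeta_i = 0$ otherwise; this has $\|\zeta\|_0 \leq \eta m$. With $y = \zeta$, the objective difference telescopes:
\[
\|X\wh{w} - y\|_1 - \|Xw^* - y\|_1 \;=\; \|(Xh)_{\overline{T}}\|_1 - \|(Xh)_T\|_1.
\]
Since the entries of $Xh$ are i.i.d.\ $N(0,\|h\|_2^2)$, Fact~\ref{ell1_est} (together with its mirror direction, obtained by pairing the stated upper bound on $\widehat{B}$ with concentration of the total $\ell_1$ mass $\widehat{B}(\eta) + \widehat{G}(\eta)$ around $\sqrt{2/\pi}$) yields, for $m \gtrsim n/\eps^2$,
\[
\frac{\|(Xh)_T\|_1}{m\|h\|_2} \;\geq\; B(\eta-\eps) - \eps
\quad\text{and}\quad
\frac{\|(Xh)_{\overline{T}}\|_1}{m\|h\|_2} \;\leq\; G(\eta+\eps) + \eps.
\]
A Taylor expansion of $B - G$ around $\eta_0$ (using $B(\eta_0) = G(\eta_0)$ together with $B'(\eta_0) > 0$ and $G'(\eta_0) < 0$) then gives $B(\eta-\eps) - G(\eta+\eps) \gtrsim \eps$ whenever $\eta > \eta_0 + \eps$, so the objective difference is $\lesssim -\eps m \|h\|_2 < 0$, proving $w^*$ is not the $\ell_1$ minimizer.

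For Part 2 the construction is even simpler. Fix any nonzero $h$ and set $d := Xh$, so $y = X(w^* + h)$. When $m \gtrsim n/\eps^2$, the random Gaussian matrix $X$ has trivial kernel with high probability, so the unique minimizer of $\|Xw - y\|_1$ is $\wh{w} = w^* + h$ (achieving zero objective); hence $\|\wh{w} - w^*\|_2 = \|h\|_2$. By standard concentration, $\|d\|_1 = \|Xh\|_1$ lies within a $(1 \pm \eps)$ factor of $\sqrt{2/\pi} \cdot m \|h\|_2$ w.h.p., giving
\[ \|d\|_1 / m \;\lesssim\; \|h\|_2 \;=\; \|\wh{w} - w^*\|_2. \]

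The principal technical item is the Taylor-expansion calculation $B(\eta-\eps) - G(\eta+\eps) \gtrsim \eps$ for $\eta > \eta_0 + \eps$; this is precisely the mirror of equation~\eqref{eq:G-B} used in the upper bound, and goes through by the same expansion at $\eta_0$ but in the opposite direction. The only other non-cosmetic piece is the ``reverse'' lower bound on $\widehat{B}$, which follows from the stated upper bound on $\widehat{B}$ plus Chernoff-style concentration of $\sum_i |(Xh)_i|$ around its mean. Neither piece presents a genuine obstacle; the whole argument hinges on choosing $\zeta$ (resp.\ $d$) aligned with the worst-case direction $Xh$ for a single fixed $h$, which removes any need for net arguments over the whole sphere.
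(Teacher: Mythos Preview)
Your Part~1 argument is essentially the paper's: the paper takes an arbitrary $w^*$, sets $\zeta = -(Xw^*)_T$ on the top-$\eta m$ indices of $|Xw^*|$, and shows the zero vector beats $w^*$; you take $w^*=0$, set $\zeta=(Xh)_T$, and show $h$ beats $0$. These are the same construction up to translation. One minor slip: the route you describe to the ``reverse'' concentration inequalities (a lower bound on $\widehat B(\eta)$ and an upper bound on $\widehat G(\eta)$) via ``stated upper bound on $\widehat B$ plus concentration of the total mass'' gives the wrong direction---subtracting an upper bound on $\widehat B$ from a concentrated total yields a \emph{lower} bound on $\widehat G$, not an upper bound. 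This is not a real gap, though: for a single fixed $h$ the reverse inequalities follow from exactly the same DKW-plus-truncated-moment argument underlying Fact~\ref{ell1_est}, and the paper itself invokes them without further comment.

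Your Part~2 is correct but takes a genuinely different route. The paper sets $d$ to zero out the smallest $\approx(1-\eta_0)m$ entries of $Xw^*$, argues that $\ell_1$ regression then returns $0$, and bounds $\|d\|_1$ in terms of $G(\eta_0)\,m\|w^*\|_2$. Your choice $d = Xh$ places $d$ in the column space of $X$; since $X$ has full column rank when $m\ge n$, the unique minimizer is $\wh w = w^* + h$ exactly, and concentration of $\|Xh\|_1$ around $\sqrt{2/\pi}\,m\|h\|_2$ finishes the bound. Your version is shorter and sidesteps any appeal to the $\eta_0$ machinery; the paper's version, by contrast, uses a $d$ that is not in the range of $X$ and ties the lower-bound constant back to $G(\eta_0)$, which more directly mirrors the structure of the upper-bound analysis. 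Both establish the stated $\|d\|_1/m$ lower bound.
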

\begin{proof} 
	Let $T$ be the support of the largest $\eta m$  entries of $(Xw^*)$. For the first part, let $\zeta = -(Xw^*)_T$ and observe that since $d= 0$, the loss of the $0$ vector with respect to $y$ is $\|X_{\overline{T}} w^*\|_1$ and the loss of $w^*$ is $\|X_{T} w^*\|_1$. Since $m \gtrsim \frac{n}{\eps^2}$ we know that with probability $1-e^{-Cn}$ for some constant $C$,  
	\[\|X_T w^*\|_1 > \left(B \left(\eta - \frac{\eps}{2} \right) - \frac{\eps}{2}\right) \cdot m \] and \[\|X_{\overline{T}} w^*\|_1 < \left(G \left(\eta + \frac{\eps}{2}\right) + \frac{\eps}{2}\right) \cdot m.\] Hence, 
	\[ \|X_T w^*\|_1 - \|X_{\overline{T}} w^*\|_1 > \left(B \left(\eta - \frac{\eps}{2}\right) - G\left(\eta + \frac{\eps}{2}\right)- \eps\right) \cdot m \gtrsim  m \eps. \]
	The final inequality follows from a calculation similar to the one used to show~\eqref{eq:G-B}, by looking at the Taylor expansion of $B(\eta_0 + \frac{x}{2}) - G(\eta_0 + \frac{3x}{2}) - x$ around $x = 0$.  This implies $ \|X_T w^*\|_1  > \|X_{\overline{T}} w^*\|_1$ and so $\ell_1$ regression cannot return $w^*$ as the answer. 
	
	Let $T'$ be the support of the smallest $(1-(\eta_0 + \frac{\epsilon}{2}))m$ entries of $(Xw^*)$.
	For the second part, set $d = -(Xw^*)_{T'}$. Now, more than $(1-\eta_0)m$ entries of $y$ are $0$, and so $\ell_1$ regression will recover $0$. The resulting error in 2-norm is $\|\widehat{w} - w^*\|_2 = \|w^*\|_2$. Since $d = -(Xw^*)_{T'}$, $\|d\|_1$ is the $\ell_1$ norm of $Xw^*$ over the smallest $1-\eta_0 - \frac{\epsilon}{2}$ fraction of the indices. By arguments similar to earlier \[\|d\|_1 = \|(Xw^*)_{T'}\|_1 > m\left(G\left(\eta_0 - \frac{\eps}{2} \right) - \eps\right) \cdot \|w^*\|_2.\] It can be checked whenever $\eps < 0.2$, $G\left( \eta_0 -\frac{\eps}{2} \right) - \eps > 0.4$.  Hence, \[\|\widehat{w} - w^*\|_2 = \|w^*\|_2 \geq \frac{1}{G\left(\eta_0 -\frac{\eps}{2} \right) - \eps} \left( \frac{\|d\|_1}{m} \right) \gtrsim \frac{\|d\|_1}{m}.\]
\end{proof}



\section{Empirical comparisons to prior work}\label{sec:empirical}

\begin{figure}
	\centering
	\includegraphics[width=0.48\textwidth]{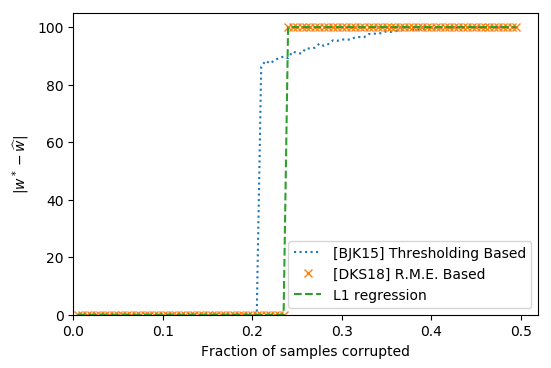}
	\captionsetup{aboveskip=5pt}
	\caption{Empirically in the one-dimensional case, the recovery
		threshold for $\ell_1$ regression and the robust mean
		estimation-based algorithm of \cite{2018arXiv1Diakonikolas}
		match at $\eta_0$.}
	\label{fig:error_v_corruption}
\end{figure}

We compare the tolerance of $\ell_1$ regression to algorithms from two recent papers
\cite{2018arXiv1Diakonikolas} and \cite{bhatia2015robust}. We study the fraction of corruptions these algorithms can tolerate in the limit of a large number of samples. Our experiment is the following - we study the one-dimensional case
where $w^* = 100$ and the adversarial noise is selected by setting the
largest $\eta$ fraction of observed $y$'s to $0$. We run the three algorithms on a dataset of $1000$ samples for $\eta$ ranging from $0$ to $0.5$ and consider the point when the algorithm stops providing exact recovery. In Figure \ref{fig:error_v_corruption} we plot the the error of the recovered $\widehat{w}$ from $w^*$ against the fraction of corruptions. 

While the fraction of corruptions tolerated by the algorithm from \cite{bhatia2015robust} for our example is more than what they prove in general (which is $\frac{1}{65}$), the fraction of corruptions it can tolerate is still less than that of $\ell_1$ regression on this example. For $\ell_1$ regression we observe what we have already proven earlier, that this example achieves our upper bound - i.e. it tolerates no more than an $\eta_0 \approx 0.239$ fraction of corruptions.

Curiously, the robust mean estimation based algorithm by \cite{2018arXiv1Diakonikolas} on this example tolerates exactly the same fraction of corruptions as $\ell_1$ regression.


%

\section*{Acknowledgements} 
The authors would like to thank the anonymous reviewers and Aravind Gollakota for helpful suggestions and comments about the writeup. 

\bibliographystyle{alpha}
\bibliography{main}

\section*{Appendix}
\appendix 

\section{Facts about N(0,1)}\label{appendix:A} 
In this section, let $\Phi$ and $\widehat{\Phi}$ denote the CDF of  $N(0,1)$ and the CDF of the uniform distribution over a set of samples drawn from $N(0,1)$ respectively -- the set will be clear from context. $B(\gamma)$ and $G(\gamma) $ refer to the $\ell_1$ norm of the largest (in absolute value) $\gamma$ fraction of the entries and the smallest $1-\gamma$ fraction of the entries with respect to the Gaussian distribution, and $\widehat{G}(\gamma)$ and $\widehat{B}(\gamma)$ are defined similarly but for the uniform distribution over samples from $N(0,1)$. 

\begin{fact}\label{fact:1}
	Let $S = \{ z_1, \dots, z_m \}$ be i.i.d. samples from  $N(0,1)$. Then for any $\tau, \gamma \in [0,1]$ the following holds with probability $1-4e^{-2m\tau^2}$.
	\[ \Phi^{-1}(\gamma-\tau) <  \widehat{\Phi}^{-1}(\gamma) <   \Phi^{-1}(\gamma +\tau). \]
\end{fact}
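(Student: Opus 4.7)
The plan is to reduce this to the Dvoretzky–Kiefer–Wolfowitz (DKW) inequality, which asserts that for i.i.d.\ samples from any continuous distribution,
\[
\Pr\!\left[\sup_{x \in \R}\bigl|\widehat{\Phi}(x) - \Phi(x)\bigr| \geq \tau\right] \leq 2 e^{-2m\tau^2},
\]
and then translate this uniform closeness of the CDFs into the desired closeness of the quantiles. Since a union bound over the two one-sided versions of DKW yields a probability bound of $4e^{-2m\tau^2}$, this matches the constant in the statement.

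Concretely, I would first condition on the event that $|\widehat{\Phi}(x)-\Phi(x)| < \tau$ for all $x \in \R$, which has probability at least $1 - 4 e^{-2m\tau^2}$ (and actually $1 - 2e^{-2m\tau^2}$ by Massart's tight constant, but the looser bound suffices). Next, I would evaluate this pointwise bound at the two specific points $a := \Phi^{-1}(\gamma+\tau)$ and $b := \Phi^{-1}(\gamma-\tau)$. At $a$, the DKW event gives $\widehat{\Phi}(a) > \Phi(a) - \tau = \gamma$, so by the definition of the empirical quantile we conclude $\widehat{\Phi}^{-1}(\gamma) < a = \Phi^{-1}(\gamma+\tau)$. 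Symmetrically, at $b$ we get $\widehat{\Phi}(b) < \Phi(b) + \tau = \gamma$, which yields $\widehat{\Phi}^{-1}(\gamma) > b = \Phi^{-1}(\gamma-\tau)$.

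The only subtleties are boundary issues: one should note that the samples are continuous almost surely so the empirical CDF has no ties with probability $1$, which makes the strict inequalities straightforward. Likewise, if $\gamma \pm \tau$ falls outside $[0,1]$ the statement becomes vacuous in that direction (interpret $\Phi^{-1}(\gamma-\tau) = -\infty$ or $\Phi^{-1}(\gamma+\tau) = +\infty$), so the meaningful regime is $\tau < \min(\gamma, 1-\gamma)$.

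I do not anticipate a real obstacle here: once DKW is invoked, the argument is essentially the standard dictionary between $\sup$-norm CDF bounds and quantile bounds. The only thing to be a bit careful about is the factor in the exponent: if one wants to avoid citing Massart's constant, invoking the one-sided DKW inequality twice (one direction for each of the two desired inequalities) and union bounding gives exactly the stated $4 e^{-2m\tau^2}$ probability.
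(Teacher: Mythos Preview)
Your proposal is correct and is essentially the same argument as the paper's: apply DKW at the points $\Phi^{-1}(\gamma-\tau)$ and $\Phi^{-1}(\gamma+\tau)$, then use monotonicity of $\widehat{\Phi}$ to convert the CDF bound into a quantile bound, union bounding the two one-sided events to get the $4e^{-2m\tau^2}$ factor. Your presentation is in fact a bit cleaner than the paper's (which has a minor sign/direction slip in the displayed probability), and your remarks on the boundary cases and the strict inequalities are accurate.
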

\begin{proof}
	The Dvoretzky-Kiefer-Wolfowitz inequality states 
	\begin{equation} \label{DKW}
	\Pr\Bigl(\sup_{x\in\mathbb R} \bigl(|\widehat{\Phi}(x) - \Phi(x)|\bigr) > \varepsilon \Bigr) \le 2e^{-2m\varepsilon^2}\qquad \text{for every }\varepsilon\geq\sqrt{\tfrac{1}{2m}\ln2}.
	\end{equation} 
	If $t = \Phi^{-1}(\eta)$, Equation \ref{DKW} then tells us that for any $\epsilon$ independent of $m$ (i.e. constant $\epsilon$), 
	\[  \Pr\Bigl(|\widehat{\Phi}(t) - \eta| > \varepsilon \Bigr) \le 2e^{-2m\varepsilon^2}\]
	i.e. 
	\[  \Pr\Bigl( \widehat{\Phi}(t) \leq \eta + \epsilon \Bigr) \le 2e^{-2m\varepsilon^2}.\]
	Setting $\eta = \gamma - \tau$ and $\epsilon = \tau$ we see 
	\[  \Pr\Bigl( \widehat{\Phi}(t) \leq \gamma \Bigr) \le 2e^{-2m\tau^2}.\]
	Monotonicity of $\widehat{\Phi}$ then proves the first inequality. The second inequality follows similarly.
\end{proof}

\begin{fact}
	Let $S = \{ z_1, \dots, z_m \}$ be i.i.d. samples from  $N(0,1)$ and let $\eta < \eta_0$. Then with probability $1-O\left(e^{\frac{m\epsilon^2}{2}}\right)$,
	
	\[ \widehat{G}(\eta) > G(\eta - \epsilon) - \epsilon\]
	and
	\[ \widehat{B}(\eta) < B(\eta + \epsilon) + \epsilon. \]
\end{fact}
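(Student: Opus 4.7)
The plan is to prove both inequalities via DKW-style control of the empirical CDF combined with integration by parts. Let $F(t) = 2\Phi(t) - 1$ denote the CDF of $|Z|$ for $Z \sim N(0,1)$, let $\widehat{F}$ be the empirical CDF of $\{|z_i|\}_{i=1}^m$, let $\hat{t}$ be the empirical $(1-\eta)$-quantile (so $\widehat{F}(\hat{t}) = 1-\eta$), and let $t_\eta := F^{-1}(1-\eta)$ be its population analogue. The key observation is that
\[ \widehat{G}(\eta) = \int_0^{\hat{t}} z\,d\widehat{F}(z), \qquad \widehat{B}(\eta) = \int_{\hat{t}}^{\infty} z\,d\widehat{F}(z), \]
with analogous formulas defining $G(\eta)$ and $B(\eta)$ via $F$ and $t_\eta$, so the problem reduces to comparing truncated Riemann-Stieltjes integrals against $\widehat{F}$ and $F$.

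First I would apply DKW to the sample $\{|z_i|\}$ to get $\sup_t|\widehat{F}(t) - F(t)| \leq \tau$ with probability $1 - 2e^{-2m\tau^2}$, which simultaneously yields the quantile bracket $F^{-1}(1-\eta-\tau) \leq \hat{t} \leq F^{-1}(1-\eta+\tau)$ exactly as in Fact~\ref{fact:1}; choosing $\tau = \Theta(\epsilon)$ matches the target failure probability $O(e^{-m\epsilon^2/2})$. Next, integration by parts converts the Riemann-Stieltjes integrals into integrals of the CDF itself,
\[ \int_0^t z\,d\widehat{F}(z) = t\,\widehat{F}(t) - \int_0^t \widehat{F}(z)\,dz, \]
and subtracting the analogous identity for $F$ produces the core estimate $\bigl|\int_0^t z\,d\widehat{F}(z) - \int_0^t z\,dF(z)\bigr| \leq 2\tau t$ for every deterministic $t$.

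The heart of the argument is a decomposition that pushes the random $\hat{t}$ out to a deterministic endpoint of the DKW bracket without losing the desired sign of the shift in the argument of $G$. For the lower bound on $\widehat{G}(\eta)$ I would write $\widehat{G}(\eta) = \int_0^{F^{-1}(1-\eta+\tau)} z\,d\widehat{F}(z) - \int_{\hat{t}}^{F^{-1}(1-\eta+\tau)} z\,d\widehat{F}(z)$. The first term, by the integration-by-parts estimate applied at the deterministic threshold $F^{-1}(1-\eta+\tau)$, is at least $\int_0^{F^{-1}(1-\eta+\tau)} z\,dF(z) - O(\tau) = G(\eta - \tau) - O(\tau)$, using that $F^{-1}(1-\eta+\tau)$ is an absolute constant because $\eta < \eta_0$ is bounded away from $0$. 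The second (``overcount'') term is bounded by the maximum integrand $F^{-1}(1-\eta+\tau)$ times the empirical mass of $[\hat{t}, F^{-1}(1-\eta+\tau)]$, which by DKW is at most $2\tau$; hence this contribution is $O(\tau)$ as well. Combining, $\widehat{G}(\eta) \geq G(\eta - \tau) - O(\tau)$, and rescaling $\tau$ by an absolute constant absorbs the slack into the stated additive $\epsilon$. The upper bound on $\widehat{B}(\eta)$ is entirely symmetric, with the tail integral decomposed using the other endpoint $F^{-1}(1-\eta-\tau)$ of the bracket.

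The main obstacle I expect is precisely this decomposition trick: the naive approach of comparing $\int_0^{\hat{t}} z\,d\widehat{F}(z)$ directly to $\int_0^{t_\eta} z\,dF(z)$ gives only $\widehat{G}(\eta) \geq G(\eta+\epsilon) - O(\epsilon)$, which is strictly weaker than the claimed $G(\eta-\epsilon) - O(\epsilon)$ because $G$ is decreasing. The trick of integrating $\widehat{F}$ up to the \emph{outer} bracket endpoint gains access to $G(\eta-\tau)$, with only a small overcount that contributes $O(\tau)$ because its $\widehat{F}$-mass is bounded by DKW and its integrand is $O(1)$. Once the endpoints are chosen correctly, the rest is routine.
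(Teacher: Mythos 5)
Your technique---DKW control of the empirical CDF of $|z_i|$, integration by parts on the truncated Stieltjes integrals, and pushing the random quantile $\hat t$ out to a deterministic endpoint of the DKW bracket---is sound and genuinely different from the paper's proof, which instead treats $\frac1m\sum_i|z_i|\mathbf{1}_{[-t,t]}(z_i)$ as a rescaled empirical mean of a sub-Gaussian truncated variable and combines its concentration with the quantile bracket of Fact~\ref{fact:1}. Two caveats on the technique itself: your $\widehat B$ bound is not ``entirely symmetric,'' because the integrand $z$ is unbounded on the upper tail, so the integration-by-parts comparison must be cut off at some level like $\sqrt{2\log(1/\tau)}$ with the residual tail mass controlled separately (this is exactly what the sub-Gaussianity step in the paper's proof is doing); and $F^{-1}(1-\eta+\tau)$ is not bounded by an absolute constant merely because $\eta<\eta_0$---that bounds $\eta$ away from $1$, not from $0$---though this only costs a $\sqrt{\log(1/\eta)}$ factor.

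The genuine gap is your last step: ``rescaling $\tau$ by an absolute constant absorbs the slack.'' It does not. Your argument yields $\widehat G(\eta)\ge G(\eta-\tau)-C\tau$ with $C\approx 4F^{-1}(1-\eta+\tau)>4\sqrt{2\log 2}>4$; replacing $\tau$ by $\tau/C$ also shrinks the argument shift in $G$, and since $G$ is decreasing you land at $G(\eta-\epsilon/C)-\epsilon$, which is \emph{weaker} than the target $G(\eta-\epsilon)-\epsilon$. No choice of $\tau$ can close this, because the stated inequality is false: $G'(\gamma)=-\Phi^{-1}(1-\gamma/2)$ has magnitude exceeding $1$ for every $\gamma<\eta_0<0.317$, so $G(\eta-\epsilon)-\epsilon>G(\eta)=\lim_{m\to\infty}\widehat G(\eta)$, and the event fails with probability tending to one (e.g.\ $\eta=0.2$, $\epsilon=0.1$: $G(0.1)-0.1\approx 0.49$ versus $G(0.2)\approx 0.45$). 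What is provable---by your method or the paper's---is $\widehat G(\eta)>G(\eta+\epsilon)-\epsilon$, equivalently $G(\eta-\epsilon)-O(\epsilon)$, which is all the downstream lemmas need, since they only require the total slack in $G-B$ to be $O(\epsilon)$. For what it is worth, the paper's own proof stumbles at the same point: with $\gamma=\eta-\epsilon$ the interval $[-t,t]$ contains \emph{more} than a $(1-\eta)$ fraction of the samples with high probability, so the claimed inclusion of those samples among the smallest $(1-\eta)m$ fails; the argument goes through verbatim with $\gamma=\eta+\epsilon$.
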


\begin{proof}
	Consider the random variable $Y$ where $Y$ is $N(0,1)$ conditional from being drawn from $[-t,t]$ (i.e. $Y$ has the PDF of a truncated Gaussian distribution). If $\gamma < \frac{1}{2}$, for $t = \Phi^{-1}(1-\frac{\gamma}{2})$  
	\[ E[|Y|] = \frac{1}{1-\gamma} \int_{-t}^t |x| e^{-x^2/2} dx = \frac{1}{1-\gamma} G(\gamma). \]
	Observe that one can sample from $Y$ by sampling from $Z$ which is distributed as $N(0,1)$ and discarding samples outside $[-t,t]$. Since the PDF is scaled, we have to scale the empirical distribution as well
	\[ \widehat{E}[|Y|] = \frac{1}{m(1-\gamma)} \sum_{i=1}^m |z_i| \cdot 1_{[-t,t]} (z_i)   \] 
	Let $\gamma$ be such that $E|Y| \leq \frac{1}{2(1-\gamma)}$, then $|Y|$ has subgaussian tails with some constant parameter. To see this, observe that
	\begin{align*}
	E\left[e^{\lambda (|Y| - E[|Y|])}\right] &= \frac{2}{1-\gamma} \int_{0}^{t} e^{-x^2/2 + \lambda x - \lambda E[|Y|]} dx\\
	&\lesssim e^{ \frac{\lambda^2}{2} - (2 \cdot (1-\gamma))^{-1} \lambda}\\
	&\lesssim e^{O(\lambda^2/2 - \lambda)} 
	\end{align*}
	
	This implies concentration for the expectation \[ \Pr\left(\left|\widehat{E}[|Y|] - E[|Y|]\right| > \epsilon\right) < O\left(e^{-\frac{m \epsilon^2}{2} }\right).\] Multiplying both sides inside the probability by $(1-\gamma)$ and noting that since $\gamma < \frac{1}{2}$ this is bounded by $\frac{1}{2}$ we see 
	\[ \Pr\left(\left|\frac{1}{m} \sum_{i=1}^m |z_i| \cdot 1_{[-t,t]} (z_i) - G(\gamma)\right| > \frac{\epsilon}{2}\right) < O\left(e^{-\frac{m \epsilon^2}{2} }\right)\] 
	We now set $\gamma = \eta - \epsilon$. Since $\eta < \eta_0 \approx 0.239$ and $\epsilon> 0$, $\gamma < \frac{1}{2}$.  Fact \ref{fact:1} now implies that with probability  $1-2e^{-2m\epsilon^2}$, at most an $1-\eta$ fraction of the samples lie in $[-t,t]$. These have to be smaller in absolute value than the remaining samples. Since $\widehat{G}(\eta)$ is defined to be the $\ell_1$ norm of the $1-\eta$ fraction of points smallest in absolute value, we see
	\[ \widehat{G}(\eta) > \frac{1}{m} \sum_{i=1}^m |z_i| \cdot 1_{[-t,t]} (z_i).\] This implies \[ \Pr\left(\widehat{G}(\eta) > G(\eta  -\epsilon) - \epsilon \right) < O\left( e^{-2m\epsilon^2} +  e^{-\frac{m \epsilon^2}{2}}\right). \]
	The other direction is done similarly, however in this case $Y$ is the random variable gotten by conditioning samples from $N(0,1)$ to be outside $[-t,t]$. 
\end{proof}

\section{Shelling argument}\label{appendix:B}

\restate{lem:shelling}
\begin{proof} 
	The goal is to transfer bounds from the eigenvalues of $A$ restricted over the sparse vectors, to the eigenvalues of $A$ restricted over $V_S$. To this end we will select an element of $V_S$ and express it as a sum of sparse vectors. Applications of standard inequalities will then let us transfer bounds. 
	
	For any $v \in V_S$ partition $[n]$ into
	$S, T_1, \dots, T_{\frac{n-k}{\alpha^2 k}}$ where $T_i$ is the set of indices corresponding to the $i^{th}$ largest $\alpha^2 k$-sized set of elements from $v_{\overline {S}}$.
	
	We will now prove the upper and lower bounds on the eigenvalues for vectors restricted to the set $V_S$. The triangle inequality implies
	
	\[ \|Av_{S \cup T_1}\|_1 - \sum_{i > 1} \|A v_{T_i}\|_1 \leq \|Av\|_1 \leq \|Av_{S \cup T_1}\|_1 + \sum_{i > 1} \|A v_{T_i}\|_1  \]
	
	Since $v_{S \cup T_1}$ and $v_{T_i}$ are all at most $(1+\alpha^2)k$-sparse, 
	
	\[ L \|v_{S \cup T_1}\|_2 - \sum_{i > 1} \|A v_{T_i}\|_1 \leq \|Av\|_1 \leq U \|v_{S \cup T_1}\|_2 + \sum_{i > 1} \|A v_{T_i}\|_1  \]
	
	We now prove an upper bound on the quantity $\sum_{i > 1} \|Av_{T_i}\|_1$. This will give us both the upper and lower bounds we need. To this end, observe that all coordinates of $v_{T_{i-1}}$ are greater than or equal to all coordinates of $v_{T_i}$. This implies 
	\[ \|v_{T_{i}}\|_{\infty} \leq \frac{\|v_{T_{i-1}}\|_1}{\alpha^2 k} \]
	which, in turn, implies 
	\[ \|v_{T_i}\|_2 \leq \frac{1}{\alpha \sqrt{k}} \|v_{T_{i-1}}\|_1.\]
	Using the bounds on the restricted sparse eigenvalues from the statement, we get
	\begin{align*}
	\sum_{i > 1} \|Av_{T_i}\|_1  &\leq U \cdot \sum_{i > 1} \|v_{T_i}\|_2 \\
	&\leq   \frac{U}{\alpha \sqrt{k}} \|v_{\overline{S}}\|_1\\
	&\leq  \frac{U}{\alpha \sqrt{k}} \left(\|v_{S}\|_1 + \Delta\right)\\
	&\leq  \frac{U}{\alpha} \cdot \|v_{S}\|_2 +  \frac{U}{\alpha \sqrt{k}} \cdot \Delta\\
	&\leq \frac{U}{\alpha} \cdot \|v_{S \cup T_1}\|_2 +  \frac{U}{\alpha \sqrt{k}} \cdot \Delta
	\end{align*}	
	
	Using the inequality above in addition to the bounds on $\|Av\|_1$, we get after some rearrangement
	
	\[\left( L  -  \frac{U}{\alpha} \right) \|v_{S \cup T_1}\|_2 -  \frac{U}{\alpha \sqrt{k}} \cdot \Delta \leq \|Av\|_1 \leq U \left(1 + \frac{1}{\alpha}\right) \|v_{S \cup T_1}\|_2 +  \frac{U}{\alpha \sqrt{k}} \cdot \Delta \]
	
	The bounds above are in terms of $\|v_{S \cup T_1}\|_2$, however we need bounds in terms of $\|v\|_2$. For the upper bound, it is sufficient to note that $\|v_{S \cup T_1}\|_2 < \|v\|_2$. For the lower bound, we need the inequalities below. 
	
	The definition of $T_i$ and applications of the Cauchy-Schwartz inequality gives us
	\[ 
	\|v_{\overline {S \cup T_1}}\|_2 \leq \sum_{i \geq 2} \|v_{T_i} \|_2 \leq \frac{1}{\alpha \sqrt{k}} \sum_{i \geq 1}  \|v_{T_{i-1}}\|_1 \leq \frac{\|v_{\overline{S}}\|_1}{\alpha \sqrt{k}} \leq \frac{\|v_{S}\|_1 + \Delta}{\alpha \sqrt{k}} \leq \frac{\|v_{S}\|_2}{\alpha} + \frac{\Delta}{\alpha \sqrt{k}}.\]
	This, in turn, results in an upper bound on $\|v\|_2$ in terms of $\|v_{S \cup T_1}\|_2$,
	\begin{align*}
	\|v\|_2 &\leq \|v_{S \cup T_1}\|_2 + \|v_{\overline{S \cup T_1}}\|_2 \\
	&\leq  \|v_{S \cup T_1}\|_2 +  \sum_{i \geq 2} \|v_{T_i} \|_2 \\
	&\leq \|v_{S \cup T_1}\|_2 + \frac{\|v_{S}\|_2}{\alpha} + \frac{\Delta}{\alpha \sqrt{k}}\\
	&\leq \left(1+ \frac{1}{\alpha}\right) \|v_{S \cup T_1}\|_2 + \frac{\Delta}{\alpha \sqrt{k}}\\
	&\implies \|v_{S \cup T_1}\|_2 \geq  \frac{\alpha}{1+\alpha} \left(\|v\|_2 - \frac{\Delta}{\alpha \sqrt{k}}\right) 
	\end{align*}
	and so
	\[ \frac{L}{1+\alpha} \left( \alpha  -  \frac{U}{L} \right)  \left(\|v\|_2 - \frac{\Delta}{\alpha \sqrt{k}} \right) -  \frac{U \Delta}{\alpha \sqrt{k}} \leq \|Av\|_1 \leq U \left(1 + \frac{1}{\alpha}\right) \|v_{S \cup T_1}\|_2 +  \frac{U \Delta}{\alpha \sqrt{k}} \]
	At this point, we have the upper bound, to complete the proof of the lower bound, observe that standard manipulations give us
	\begin{align*}
	\frac{L}{1+\alpha} \left( \alpha  -  \frac{U}{L} \right)  \left(\|v\|_2 - \frac{\Delta}{\alpha \sqrt{k}} \right) &= \frac{L}{1+\alpha} \left( \alpha  -  \frac{U}{L} \right)\|v\|_2 - \frac{L}{1+\alpha} \left( \alpha  -  \frac{U}{L} \right) \frac{\Delta}{\alpha \sqrt{k}}\\
	&= \frac{L}{1+\alpha} \left( \alpha  -  \frac{U}{L} \right)\|v\|_2 - \frac{ \alpha L - U}{1+\alpha} \frac{\Delta}{ \alpha \sqrt{k}}\\
	&\geq \frac{L}{1+\alpha} \left( \alpha  -  \frac{U}{L} \right)\|v\|_2 -  \frac{ U \Delta}{ \alpha \sqrt{k}}
	\end{align*}
	
	This gives us the Lemma, 
	
	\[  \frac{L}{1+\alpha} \left( \alpha  -  \frac{U}{L} \right)\|v\|_2 -  \frac{2U\Delta}{ \alpha \sqrt{k}}  \leq \|Av\|_1 \leq U \left(1 + \frac{1}{\alpha}\right) \|v_{S \cup T_1}\|_2 +  \frac{U \Delta}{\alpha \sqrt{k}}. \]
	
\end{proof} 
%

\end{document}